\tikzset{
  snode/.style={circle,fill=red!20,draw,font=\sffamily\large\bfseries},
 onode/.style={circle,fill=blue!20,draw,font=\sffamily\large\bfseries},
  xnode/.style={circle,draw,font=\sffamily\large\bfseries}}
\tikzstyle{op} = [draw,fill=white,minimum size=1.5em]
\tikzstyle{ctr} = [draw,fill,shape=circle,minimum size=5pt,inner
\newtheorem{theorem}{Theorem}
\newtheorem{lemma}{Lemma}
\newtheorem{proposition}{Proposition}
\newcommand{\nc}{\newcommand}
\def\r{\rho}
\def\ph{\varphi}
\def\ps{\psi}
 \nc{\bA}{{\bf A}} \nc{\bB}{{\bf B}} \nc{\bC}{{\bf C}}
 \nc{\bD}{{\bf D}} \nc{\bE}{{\bf E}} \nc{\bF}{{\bf F}}
 \nc{\bG}{{\bf G}} \nc{\bH}{{\bf H}} \nc{\bI}{{\bf I}}
 \nc{\bJ}{{\bf J}} \nc{\bK}{{\bf K}} \nc{\bL}{{\bf L}}
 \nc{\bM}{{\bf M}} \nc{\bN}{{\bf N}} \nc{\bO}{{\bf O}}
 \nc{\bP}{{\bf P}} \nc{\bQ}{{\bf Q}} \nc{\bR}{{\bf R}}
 \nc{\bS}{{\bf S}} \nc{\bT}{{\bf T}} \nc{\bU}{{\bf U}}
 \nc{\bV}{{\bf V}} \nc{\bW}{{\bf W}} \nc{\bX}{{\bf X}}
 \nc{\bZ}{{\bf Z}}
\nc{\cA}{{\cal A}} \nc{\cB}{{\cal B}} \nc{\cC}{{\cal C}}
\nc{\cD}{{\cal D}} \nc{\cE}{{\cal E}} \nc{\cF}{{\cal F}}
\nc{\cG}{{\cal G}} \nc{\cH}{{\cal H}} \nc{\cI}{{\cal I}}
\nc{\cJ}{{\cal J}} \nc{\cK}{{\cal K}} \nc{\cL}{{\cal L}}
\nc{\cM}{{\cal M}} \nc{\cN}{{\cal N}} \nc{\cO}{{\cal O}}
\nc{\cP}{{\cal P}} \nc{\cQ}{{\cal Q}} \nc{\cR}{{\cal R}}
\nc{\cS}{{\cal S}} \nc{\cT}{{\cal T}} \nc{\cU}{{\cal U}}
\nc{\cV}{{\cal V}} \nc{\cW}{{\cal W}} \nc{\cX}{{\cal X}}
\nc{\cZ}{{\cal Z}}
\def\bpp{\begin{proposition}}
\def\epp{\end{proposition}}
\def\bpf{\begin{proof}}
\def\epf{\end{proof}}
\def\bl{\begin{lemma}}
\def\el{\end{lemma}}
\def\bea{\begin{eqnarray}}
\def\eea{\end{eqnarray}}
\def\ox{\otimes}
\begin{document}

\title{Graph states of prime-power dimension from generalized CNOT quantum circuit}

\author{Lin Chen}
\affiliation{School of Mathematics and Systems Science, Beihang University, Beijing 100191, China}
\affiliation{International Research Institute for Multidisciplinary Science, Beihang University, Beijing 100191, China}

\author{D. L. Zhou}\email{zhoudl72@iphy.ac.cn} 
\affiliation{Beijing National Laboratory for Condensed Matter Physics,
and Institute of Physics, Chinese Academy of Sciences, Beijing 100190, China}

\date{\today}

\pacs{03.65.Ud, 03.67.Mn}

\begin{abstract}
We construct multipartite graph states whose dimension is the power of a prime number. This is realized by the finite field, as well as the generalized controlled-NOT quantum circuit acting on two qudits. We propose the standard form of graph states up to local unitary transformations and particle permutations. The form greatly simplifies the  classification of graph states as we illustrate up to five qudits. We also show that some graph states are multipartite maximally entangled states in the sense that any bipartite of the system produces a bipartite maximally entangled state. We further prove that 4-partite maximally entangled states exist when the dimension is an odd number at least three or a multiple of four.
\end{abstract}

\maketitle

\section{Introduction}
\label{sec:introduction}

Maximal entanglement is the key ingredient in quantum teleportation, computing and the violation of Bell inequality. The maximally entangled state of two qubits can be created by
controlled-phase gate or controlled-not (CNOT) gate. In this sense, they have the same power to create entanglement. In fact, the two gates are related by local Hadamard
gates.  As we know, only one type of two-qubit unitary gates and single qubit
gates are enough to build a universal quantum circuit. A natural idea
is to use those gates to generate maximally entangled states in many qubit
case \cite{ffp08,gw10,vsk13}. The graph states and cluster states are
generated by applying two-qubit phase gates to an initially product state \cite{br01}.
Single-qubit gates are not involved in the generation. So the quantum circuit to create graph states is composed of pure control phase gates.
The graph states and continuous-variable cluster states are constructed to study one-way quantum computing \cite{br01,rb01,ZZXS2003,mvg06}.  They are useful for self-testing of nonlocal correlations \cite{mckague10} and their entanglement can be effectively evaluated by the Schmidt measure \cite{eb01}, relative entropy of entanglement and the geometric mesure of entanglement \cite{zch10,hm13}. Recently the graph states have been generalized to prime dimensions even in continuous variables, in terms of the encoding circuit and Hadamard matrices \cite{gkl15} and quantum codes and stabilizers \cite{cyz15}.

In this paper we study the multi-qudit graph state when the dimension $d=p^m$ is a power of a prime number $p$. It ensures the existence of finite field structure, and at the same time generalizes \cite{gkl15}. With the aid of the structure,
generalized CNOT gates are defined naturally. A general $N$ qudit state generated by a quantum circuit is constructed in Eq.~\eqref{eq:15}. To simplify this state, we propose a standard form of multiqubit state in Eq.~(\ref{eq:36}).
Our first main result is Theorem \ref{thm:main}, stating that the above two families are equivalent up to local unitary transformations and particle permutations. We also propose the dual graph state of the standard form in \eqref{eq:66}, and show that they are equivalent under local unitary transformation in Theorem \ref{thm:dual}. It further simplifies the structure of multiqudit graph states, and we classify them up to five parties.

Our main task is to
find out the maximally entangled state generated by the quantum circuit
composed by pure generalized CNOT gates. The task induces another relative problem: what states are called
maximally entangled states of many-qudit system? To avoid
confusion, let us constrain our discussions in many-qudit pure states.
The basic requirement for a many-qudit state being maximally entangled
satisfies that subsystem is entangled with the other, and
any single qudit is maximally entangled with the other. We
can further introduce that the many-qudit state
is a maximally entangled state  if any bipartite of systems produces a bipartite maximally entangled state \cite{ffp08,gw10,vsk13}. We will show that some graph states are multipartite maximally entangled states. We further prove that 4-partite maximally entangled states exist when the dimension is an odd number at least three or a multiple of four. This is another main result in our paper, as stated in Theorem \ref{thm:multiple4}. These results imply that the maximal entanglement is universal in high dimensions. We also construct a connection between maximal entanglement and the entropy problem recently proposed in \cite{chl14}. 

This paper is organized as follows. In Sec. \ref{sec:finite}, we will
introduce the generalized CNOT in the qudit case with the aid of the
structure of finite field, and then a quantum circuit composed pure
generalized CNOT gates is given. In Sec. \ref{sec:notimord}, we prove
that only bipartite graph states can be generalized from the quantum
circuit of pure generalized CNOT gates. In Sec. \ref{sec:entangl-prop-qudit}, we
analyze the maximal entanglement of these states. Finally, we 
give a summary of our results and open problems in Sec. \ref{sec:con}.

\section{Quantum circuit of pure generalized CNOT gates}
\label{sec:finite}

In this section we construct the generalized CNOT gates by two one-qudit operations $A(a_m)$ and $D(a_m)$ defined in Sec. \ref{sec:finite-field-gener}. They are mathematically realized by the known finite field and the commutation relations in Sec. \ref{sec:usef-relat-relat}. To illustrate the relations we construct Fig. \ref{fig:4}. Using the CNOT gates we construct the quantum circuit in Sec. \ref{sec:circuit}, and give an explicit example in Fig. \ref{fig:qc}. We will introduce a standard form of $N$-qudit graph state on finite field in \eqref{eq:36}, and show that any graph state is equivalent to the standard form up to local unitary transformations and particle permutations. To obtain a simpler classification of such states we propose Theorem \ref{thm:dual} and demonstrate it by states up to five systems respectively in Fig. \ref{fig:6} to \ref{fig:9}.

\subsection{finite field and generalized CNOT gates}
\label{sec:finite-field-gener}

As is well known, when $d$ is the power of a prime number, i.e.,
\begin{equation}
  d=p^{n},
\end{equation}
where $p$ is prime, and $n$ is a positive integer, there is a field
$F_{d}$. Note that the field $F_{d}$ is unique up to isomorphism. The
elements of the Field $F_{d}$ are denoted as
$\left\{ a_{i},i\in\left\{ 0,1,\ldots,d-1\right\} \right\} $, where
$a_{0}\equiv0$ and $a_{1}\equiv1$ are the units for the sum and the
product operations respectively.

We introduce a $d$-dimensional Hilbert space $H_{d}$ with a natural
orthonormal basis $\left\{ | a_{i}\rangle\right\}$. With the aid of
the sum and product operations in the field, two classes of basic
one-qudit operations are defined
\begin{eqnarray}
  A\left(a_{m}\right)| a_{i}\rangle & = & | a_{i}+a_{m}\rangle, \label{eq:18}\\
  D\left(a_{m}\right)| a_{i}\rangle & = & | a_{m}a_{i}\rangle.\label{eq:19}
\end{eqnarray}
Obviously, the operation $A\left(a_{m}\right)$ is unitary for any $m$.
If $a_{m}\neq{0}$, then $D\left(a_{m}\right)$ is also unitary.

Since $F_{d}$ is an Abelian group under the operation $+$, then we
have
\begin{equation}
  A\left(a_{m}\right)| s\rangle=| s\rangle,
\end{equation}
where
\begin{equation}
  | s\rangle=\frac{1}{\sqrt{d}}\sum_{i}| a_{i}\rangle.
\end{equation}

We introduce the generalized CNOT gate from qudit $m$ to qudit
$n$ labeled by $a_{k}$ defined by
\begin{equation}
  C_{m n}\left(a_{k}\right)| a_{i}\rangle_{m}|
  a_{j}\rangle_{n} = | a_{i}\rangle_{m}| a_{j}+a_{i}a_{k}\rangle_{n}, \label{eq:2}
\end{equation}
where qudit $m$ is the control qudit, and qudit $n$ is the target
qudit.

First, we notice that
\begin{equation}
  \label{eq:38}
  C_{mn}(a_{k}) \ket{s,a_{j}}_{mn} = A_{n}(a_{j}) D_{n}(a_{k}) \ket{B}_{mn},
\end{equation}
where
\begin{equation}
  \label{eq:39}
   \ket{B}_{mn} = \frac{1}{\sqrt{d}} \sum_{i}
  \ket{a_{i},a_{i}}_{mn}.
\end{equation}

In addition, when $d=2$ and $a_{k}=1$, the gate $C_{mn}(1)$ is the
CNOT gate. Therefore any $C_{mn}(a_{k})$ with $a_{k}\neq{0}$ is
a generalized CNOT gate, which can generate the two-qudit maximal
entangled state from a separable state.

\subsection{Commutation relations for related unitary transformations}
\label{sec:usef-relat-relat}

Before simplifying the above quantum circuit and investigating the
properties of the generated states, let us first calculate the basic
commutation relations for related unitary transformations widely used throughout the paper. The proof of these relations will be given in Appendix \ref{sec:proof-comm-relat}.

\subsubsection{One qudit case}
\label{sec:one-qudit-unitary}

According to the definitions given in Eq.~(\ref{eq:18}) and
Eq.~(\ref{eq:19}), we have
\begin{eqnarray}
  \label{eq:20}
  A_{m}(a_{i})A_{m}(a_{j}) & = & A_{m}(a_{i}+a_{j}), \\
  D_{m}(a_{i}) D_{m}(a_{j}) & = & D_{m}(a_{i}a_{j}). \label{eq:21}
\end{eqnarray}

The commutation relations between $A_{m}$ and $D_{m}$ are
\begin{equation}
  \label{eq:22}
  D_{m}(a_{i}) A_{m}(a_{j}) = A_{m}(a_{i} a_{j}) D_{m}(a_{i}).
\end{equation}

In addition, we also have
\begin{equation}
  \label{eq:34}
  A_{m}(0) = D_{m}(1) = I_{m}.
\end{equation}

\subsubsection{Two-qudit case}
\label{sec:two-qudit-relations}

The first set of relations are
\begin{eqnarray}
  \label{eq:24}
  C_{mn}(a_{i}) A_{m}(a_{j}) & = & A_{n}(a_{i} a_{j}) A_{m}(a_{j})
                                   C_{mn}(a_{i}), \\
  C_{mn}(a_{i}) A_{n}(a_{j}) & = & A_{n}(a_{j})
                                   C_{mn}(a_{i}), \label{eq:25} \\
  C_{mn}(a_{i}) D_{m}(a_{j}) & = & D_{m}(a_{j})
                                   C_{mn}(a_{j} a_{i}), \label{eq:26} \\
  C_{mn}(a_{i}) D_{n}(a_{j}) & = & D_{n}(a_{j})
                                   C_{mn}( a_{j}^{-1} a_{i}). \label{eq:27}
\end{eqnarray}

The second set of relations includes two equations. The first equation
is
\begin{equation}
  \label{eq:28}
  C_{mn}(a_{i}) C_{mn}(a_{j}) = C_{mn}(a_{i} + a_{j}),
\end{equation}
which is easy to prove but important in simplifying our graph sates.

The second equation is
\begin{eqnarray}
  & & C_{mn}(a_{i}) C_{nm}(a_{j}) \nonumber\\
  & = &
        \begin{cases}
          D_{m}(A^{-1}) D_{n}(A) C_{nm}(A a_{j}) C_{mn}(A^{-1} a_{i}) &
          \text{if } A\neq 0,\\
          W_{mn} D_{m}(a_{i}) D_{n}(a_{j}) C_{mn}(a_{j}^{-1}), & \text{if } A=0,
        \end{cases} \nonumber\\
  \label{eq:30}
\end{eqnarray}
where $A=1+a_{i}a_{j}$, $W_{mn}$ is the swap gate between the qudits
$m$ and $n$.

\subsubsection{Three-qudit case}
\label{sec:three-qudit-relat}

The relations for three qudits are given by
\begin{eqnarray}
  \label{eq:31}
  C_{mn}(a_{i}) C_{ml}(a_{j}) & = & C_{ml}(a_{j}) C_{mn}(a_{i}), \\
  C_{mn}(a_{i}) C_{l n}(a_{j}) & = & C_{l n}(a_{j})
                                    C_{mn}(a_{i}), \label{eq:32} \\
  C_{nl}(a_{j}) C_{mn}(a_{i}) & = & C_{ml}(a_{i} a_{j}) C_{mn}(a_{i})
                                    C_{nl}(a_{j}). \label{eq:33}
\end{eqnarray}

Here we use two circuits to represent Eq.~(\ref{eq:33}) as shown in
Fig.~\ref{fig:4}.
\begin{figure}[htbp]
  \centering
    \begin{tikzpicture}[thick]
    \matrix[row sep=0.4cm, column sep=0.6cm] (circuit) {
      % First row.
      \node (q1) {}; &[-0.3cm] \node[ctr] (C12) {}; &  &[-0.3cm]
      \coordinate (end1); \\
      % Second row.
      \node (q2) {}; & \node[op] (A12) {$a_{i}$}; &  \node[ctr] (C23) {}; &
      \coordinate (end2);\\
      % Third row.
      \node (q3) {}; &  &  \node[op] (A23)
      {$a_{j}$}; &  \coordinate (end3); \\
    };
    \begin{pgfonlayer}{background}
            % Draw lines.
      \draw[thick] (q1) -- (end1) (q2) -- (end2) (q3) -- (end3) (C12)
      -- (A12) (C23) -- (A23);
    \end{pgfonlayer}
  \draw [ultra thick,blue,<->] (1.8,0) -- (2.6,0);
  \end{tikzpicture}
    \begin{tikzpicture}[thick]
    \matrix[row sep=0.4cm, column sep=0.6cm] (circuit) {
      % First row.
      \node (q1) {}; &[-0.3cm] & \node[ctr] (C12) {}; & \node[ctr]
      (C13) {};  &[-0.3cm]
      \coordinate (end1); \\
      % Second row.
      \node (q2) {}; & \node[ctr] (C23) {}; & \node[op] (A12) {$a_{i}$}; &  &
      \coordinate (end2);\\
      % Third row.
      \node (q3) {}; &   \node[op] (A23)
      {$a_{j}$}; & & \node[op] (A13) {$a_{i}a_{j}$}; &  \coordinate (end3); \\
    };
    \begin{pgfonlayer}{background}
            % Draw lines.
      \draw[thick] (q1) -- (end1) (q2) -- (end2) (q3) -- (end3) (C12)
      -- (A12) (C23) -- (A23) (C13) -- (A13);
    \end{pgfonlayer}
  \end{tikzpicture}
  \caption{Circuit representation of Eq.~(\ref{eq:33})\label{fig:4}}
\end{figure}
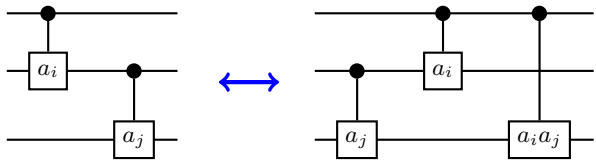

\subsection{Quantum circuit based on controlled gates}
\label{sec:circuit}

Since a controlled gate can generate a two-qudit maximally entangled
state, and a two-qudit gate is enough to entangle a complex quantum
circuit, a natural generalization is to apply the controlled gates to
generate many-qudit maximally entangled state in a quantum circuit.

A quantum circuit based on the controlled gates $C_{mn}(a_{k})$ is an
$N$-qudit circuit with a series of controlled gates operating on, see an
example as shown in Fig.~\ref{fig:qc}. A general $N$ qudit ($d=p^{m}$)
state generated by a quantum circuit is
\begin{eqnarray}
  \label{eq:15}
  \vert G\rangle & = & \otimes_{\tau} C_{m_{\tau} n_{\tau}}(b_{\tau})
                       \otimes_{i} \vert c_{i}\rangle_{i},
\end{eqnarray}
where $c_{i}\in\{s,0\}$, $b_{\tau}\in F_{d}$,
$\tau\in\{1,2,\ldots,M\}$ with $M$ being the number of the
controlled gates, and $(m_{\tau},n_{\tau})\in\{1,2,\ldots,N\}$.

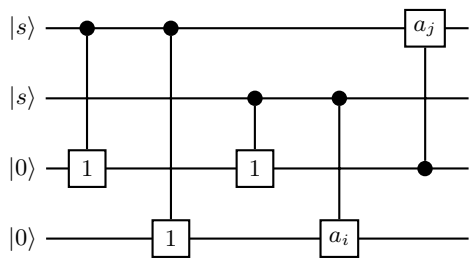
\begin{figure}[htbp]
  \centering
  \begin{tikzpicture}[thick]
    \matrix[row sep=0.4cm, column sep=0.6cm] (circuit) {
      % First row.
      \node (q1) {$\ket{s}$}; &[-0.3cm] \node[ctr] (C13) {}; &
      \node[ctr] (C14) {}; & & & \node[op] (A31) {$a_{j}$}; &[-0.3cm]
      \coordinate (end1); \\
      % Second row.
      \node (q2) {$\ket{s}$}; & & & \node[ctr] (C23) {}; & \node[ctr]
      (C24) {}; & &
      \coordinate (end2);\\
      % Third row.
      \node (q3) {$\ket{0}$}; & \node[op] (A13) {1}; & & \node[op] (A23)
      {1}; & & \node[ctr] (C31) {}; &
      \coordinate (end3); \\
      % Fourth row.
      \node (q4) {$\ket{0}$}; & & \node[op] (A14) {1}; & & \node[op]
      (A24) {$a_{i}$}; & &
      \coordinate (end4); \\
    };
    \begin{pgfonlayer}{background}
            % Draw lines.
      \draw[thick] (q1) -- (end1) (q2) -- (end2) (q3) -- (end3) (q4)
      -- (end4) (C13) -- (A13) (C14) -- (A14) (C23) -- (A23) (C24) --
      (A24) (C31) -- (A31);
    \end{pgfonlayer}
  \end{tikzpicture}
  \caption{A quantum circuit to generate 4-qudit graph
    state.\label{fig:qc}}
\end{figure}

A central problem is to investigate the possible types of entangled
states through a series of the above controlled operations with some
given initial states.

The difficulties in simplifying the circuit lies in the facts that the
number of controlled gates $M$ may be very large, and these controlled gates
do not commute with each other in general.

\section{Graph state on finite field}

\label{sec:notimord}

According to the initial state of an $N$-qudit circuit state in
Eq.~(\ref{eq:15}), we divide the $N$ qudits into two sets: the set of
qudits with the initial state $\ket{s}$ and the set of qudits with the
initial state $\ket{0}$, denoted as $S$ and $O$ respectively.

Now we introduce a standard form of $N$-qudit graph state on finite field as
\begin{equation}
  \label{eq:36}
  \prod_{i\in S, j\in O}  C_{ij}(b_{ij}) \ket{S} ,
\end{equation}
where $b_{ij}\in{F_{d}}$, and
\begin{equation}
  \label{eq:40}
  \ket{S} = \otimes_{i\in S} {\vert
    s\rangle}_{i} \otimes_{j\in O} {\vert 0\rangle}_{j}.
\end{equation}
This state is called a graph state because the time ordering of the
controlled gates is unrelated, and it is can be represented as a
directional bipartite graph. An example of a graph state for $N=7$ and
the set $S=\{1,2,3\}$ is demonstrated in Fig.~\ref{fig:10}.

\begin{figure}[htbp]
  \centering
  \begin{tikzpicture}[->,>=stealth',shorten >=1pt,auto,node
    distance=2cm, thick]
    \node[snode] (1) {1}; %
    \node[snode] (2) [right of=1] {2}; %
    \node[snode] (3) [right of=2] {3}; %
    \node[onode] (4) [left=1cm,below of=1] {4}; %
    \node[onode] (5) [right of=4] {5}; %
    \node[onode] (6) [right of=5] {6}; %
    \node[onode] (7) [right of=6] {7};
    \path[every node/.style={font=\sffamily\small}] %
    (1) edge node [above] {} (4) edge node [above] {}
    (5) edge node [above] {} (6) edge node [above] {}
    (7)                   %
    (2) edge node [above] {} (4) edge node [above] {}
    (5) edge node [above] {} (6) edge node [above] {}
    (7)                %
    (3) edge node [above] {} (4) edge node [above] {}
    (5) edge node [above] {} (6) edge node [above] {}
    (7);
% (2) edge node [above] {$2,a_{j}$} (3) (3)
    % edge node [above] {$3,a_{k}$} (4);
  \end{tikzpicture}
  \caption{A bipartite graph state with $N=7$ and $S=\{1,2,3\}$, and
    the labels $\{b_{ij}\}$ are omitted.\label{fig:10}}
\end{figure}
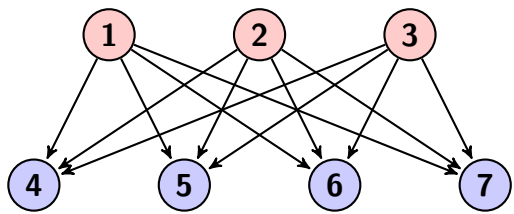

One of our central results is the following theorem:
\begin{theorem}
\label{thm:main}
  Any state in Eq.~(\ref{eq:15}) is equivalent to the standard form in
  Eq.~(\ref{eq:36}) up to local unitary transformations and particle
  permutations.
\end{theorem}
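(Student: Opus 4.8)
The plan is to regard the commutation relations Eqs.~(\ref{eq:24})--(\ref{eq:33}) as a rewriting system and apply it, together with local unitaries and particle swaps, to drive an arbitrary circuit of the form~(\ref{eq:15}) into the bipartite standard form~(\ref{eq:36}). First I would record that~(\ref{eq:36}) is well defined: every gate there has its control in $S$ and its target in $O$, so any two of them either act on disjoint qudits, share only a control (Eq.~(\ref{eq:31})), or share only a target (Eq.~(\ref{eq:32})); in each case they commute, while gates with identical $(i,j)$ fuse via Eq.~(\ref{eq:28}). Hence~(\ref{eq:36}) is independent of time ordering, which is exactly what makes it a graph state and what I am reducing to.

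The engine of the reduction is that every non-standard gate becomes trivial once it is reordered to act on an untouched input. Two facts drive this, symmetrically: a gate controlled by an $O$-qudit still in the product state $\ket{0}$ (i.e.\ not yet targeted) does nothing, since the control value $0$ makes the shift vanish; and a gate targeting an $S$-qudit still in the product state $\ket{s}$ (i.e.\ not yet used as a control) does nothing, because every shift fixes $\ket{s}$, giving $C_{mi}(a_k)\!\left(\sum_p\alpha_p\ket{a_p}_m\right)\ket{s}_i=\left(\sum_p\alpha_p\ket{a_p}_m\right)\ket{s}_i$. It therefore suffices to reorder the circuit so that, for every $O$-qudit, all its control-uses precede its target-uses, and for every $S$-qudit, all its target-uses precede its control-uses; the offending gates then act on fresh inputs and are deleted. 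The reordering is carried out with the rest of the toolkit: Eq.~(\ref{eq:33}) resolves a cascade $m\to n\to l$ --- interchanging a target-use and a later control-use of $n$ --- at the price of a shortcut edge $C_{ml}(a_ia_j)$; Eq.~(\ref{eq:30}) turns a reversed pair $C_{mn}C_{nm}$ into same-direction gates plus local $D$-factors and, only in the degenerate case $1+a_ia_j=0$, an explicit swap $W_{mn}$, which is precisely where particle permutations enter; and Eqs.~(\ref{eq:24})--(\ref{eq:27}) push any single-qudit $A$- or $D$-factor through a controlled gate (merely relabelling its field parameter), so every single-qudit operation that is generated can be swept to the output end and absorbed into the local-unitary freedom.

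Assembling these into a normal-ordering induction removes every gate violating the $S\to O$ pattern, leaving a standard-form circuit, a product of single-qudit unitaries to be discarded as local unitaries, and a product of swaps forming the asserted particle permutation. The main obstacle is \emph{termination}, not any single identity: clearing one cascade with Eq.~(\ref{eq:33}) introduces a fresh shortcut edge, and Eq.~(\ref{eq:30}) introduces new single-qudit factors, either of which could in principle recreate a violation that was already removed. I would therefore attach to the circuit a lexicographic monovariant --- ordered by the number of gates targeting $S$, then the number of gates controlled by $O$, then a cascade-depth count measuring how far each qudit's uses are from the desired order --- and verify that each reorder-and-delete step strictly decreases it, while the two triviality facts consume exactly the gates left acting on fresh inputs. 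Checking that this measure is genuinely monotone under the interacting moves of Eqs.~(\ref{eq:30}) and~(\ref{eq:33}) is the delicate bookkeeping; once termination is established, the residual $S\to O$ circuit is precisely~(\ref{eq:36}).
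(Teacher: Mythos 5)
Your route is genuinely different from the paper's. You treat the commutation relations of Sec.~\ref{sec:usef-relat-relat} as a rewriting system and normal-order the circuit; this is exactly the ``direct way'' the paper mentions just before its proof and then deliberately avoids. The paper instead tracks the \emph{state}: since every $C_{mn}(a_k)$ acts $F_d$-linearly on computational-basis labels, the output of any circuit on $\ket{s}^{\otimes k}\ket{0}^{\otimes(N-k)}$ is $d^{-k/2}\sum_{j_1,\dots,j_k}\bigotimes_q\big|\sum_i c_{i,q}a_{j_i}\big\rangle$ for a rank-$k$ matrix $[c_{i,q}]$ over $F_d$ (Eq.~(\ref{eq:bi})); permuting the parties so the first $k$ columns are independent and solving a linear system for the coefficients $b_{j,l}$ exhibits the state directly as~(\ref{eq:36}), with no induction on gates and in fact no local unitaries at all. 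Your preliminary observations are all correct and worth keeping: the gates in~(\ref{eq:36}) pairwise commute by Eqs.~(\ref{eq:28}), (\ref{eq:31}), (\ref{eq:32}), a gate controlled by an untouched $\ket{0}$ is the identity, and a gate targeting an untouched $\ket{s}$ is the identity.

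The gap is the termination argument, which you correctly identify as the crux but then defer rather than supply. It is not routine bookkeeping. Concretely: applying Eq.~(\ref{eq:30}) with $A=1+a_ia_j\neq0$ to $C_{mn}(a_i)C_{nm}(a_j)$ produces $C_{nm}(Aa_j)C_{mn}(A^{-1}a_i)$ up to local $D$'s, and the new pair has $1+(Aa_j)(A^{-1}a_i)=A$ again, so an unguarded second application undoes the first --- any monovariant must break this symmetry by context, not by the gate parameters. Likewise Eq.~(\ref{eq:33}) spawns a fresh edge $C_{ml}(a_ia_j)$ that can itself violate the $S\to O$ pattern (e.g.\ when $m\in O$ or $l\in S$), so the counts ``gates targeting $S$'' and ``gates controlled by $O$'' can \emph{increase} under a single cascade move, and you must prove the new violations sit strictly lower in your cascade-depth order. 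There is also an unaddressed interaction: a single gate is simultaneously a control-use of one qudit and a target-use of another, so the two per-qudit ordering requirements can conflict, and you need to show a consistent global schedule exists. Until the monovariant is written down and its strict decrease verified for Eqs.~(\ref{eq:30}) and~(\ref{eq:33}) in all cases (including the $A=0$ swap branch), the proof is incomplete; given how much machinery this requires, the paper's linear-algebra argument is the more economical path to the same conclusion.
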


A direct way to prove the above theorem is to show a state in the
standard form under the action of any generalized CNOT gate will still
be a standard one. More precisely, we only need to show
\begin{eqnarray}
  \label{eq:37}
  &  & C_{mn}(a_{r}) \prod_{i\in S,j\in O}   C_{ij}(b_{ij})  \ket{S} \nonumber\\
  & = & W \prod_{k=1}^{N} D(c_{k}) \prod_{i\in S,j\in O}   C_{ij}(d_{ij})  \ket{S},
\end{eqnarray}
where $m,n\in\{1,2,\ldots,N\}$ and
$a_{r},b_{ij},c_{k},d_{ij}\in F_{d}$. In fact, this can be proved by
directly applying the commutation relations given in the last section.

Here we give another more concise proof.

\begin{proof}

  Let the initial state $\ket{S}$ be
  $\ket{s}^{\otimes k} \ket{0}^{\otimes (N-k)}$ with $k\in [1,N-1]$. Any graph
  state can be expressed as
\begin{eqnarray}
\label{eq:bi}
  &&
     \ket{G}
     \notag\\
  &=&
      \bigg( \otimes^M_{\alpha=1} C_{m_{\alpha}n_{\alpha}}(b_{\alpha}) \bigg)
      \notag\\
  &&
     \bigg(
     d^{-k/2}
     \sum_{j_1,\cdots,j_k} \ket{a_{j_1},\cdots,a_{j_k}} \ket{0}^{\otimes (N-k)} \bigg)
     \notag\\
  &=&
      d^{-k/2}
      \sum_{j_1,\cdots,j_k}
      \bigg|\sum^k_{i=1} c_{i,1} a_{j_i} \bigg\rangle
      \otimes
      \cdots
      \otimes
      \bigg|\sum^k_{i=1} c_{i,N} a_{j_i} \bigg\rangle,
      \notag\\
\end{eqnarray}
where $c_{i,q}\in\mathbf{F}_d$, and the $k\times N$ matrix $[c_{i,q}]$
has rank $k$. Up to the permutation of vertices, we may assume that
the first $k$ column vectors in $[c_{i,q}]$ are linearly independent.
So there are elements $b_{1,1},\cdots,b_{k,N}\in\mathbf{F}_d$ such
that
\begin{eqnarray}
  \ket{G}&=&\otimes^k_{j=1} \otimes^N_{l=k+1} C_{j,l} (b_{j,l})
             \notag\\
         &&
            d^{-k/2}
            \sum_{j_1,\cdots,j_k}
            \bigg|\sum^k_{i=1} c_{i,1} a_{j_i} \bigg\rangle_1
            \otimes\cdots
            \notag\\
         &\otimes&
               \bigg|\sum^k_{i=1} c_{i,k} a_{j_i} \bigg\rangle_k
               \otimes
               \ket{0}_{k+1} \otimes \cdots \otimes \ket{0}_N
               \notag\\
         &=&(\otimes^k_{j=1} \otimes^N_{l=k+1}) C_{j,l} (b_{j,l})
             \ket{s}^{\otimes k} \ket{0}^{\otimes (N-k)}
             .
\end{eqnarray}
Hence we can generate $\ket{G}$ by performing the bidirectional gate
$(\otimes^k_{j=1} \otimes^N_{l=k+1}) C_{j,l} (b_{j,l})$ on the initial
state $\ket{s}^{\otimes k} \ket{0}^{\otimes (N-k)}$. The time order of
the gates $C_{j,l} (b_{j,l}$ is random, because they commute. This
completes the proof.
  
\end{proof}

The main conclusion from the above theorem is that up to local unitary
transformations and particle permutations all the states generated by
the controlled gate circuit are the directional bipartite graph states,
and the graph contains only the edges from $\ket{s}$ to $\ket{0}$,
which greatly simplifies our investigations on possible types of
entanglement created by the controlled gate circuit.

The dual graph state for the graph state specified by Eq.~(\ref{eq:36})
and Eq.~(\ref{eq:40}) is
\begin{equation}
  \label{eq:66}
  \prod_{i\in O,j\in S} C_{ij}(b_{ji}) \ket{O},
\end{equation}
where
\begin{equation}
  \label{eq:67}
  \ket{O} = \otimes_{i\in O} \ket{s}_{i} \otimes_{j\in S} \ket{0}_{j}.
\end{equation}

\begin{theorem}
\label{thm:dual}
  The two graph states given in Eq.~(\ref{eq:36}) and
  Eq.~(\ref{eq:66}) for two dual graphs are local unitary equivalent.
\end{theorem}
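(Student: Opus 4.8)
The plan is to exhibit an explicit local unitary that carries the standard graph state \eqref{eq:36} to its dual \eqref{eq:66}. The natural candidate is the qudit Fourier transform adapted to the finite field: fixing a nontrivial additive character $\chi$ of $F_{d}$ (for instance $\chi(x)=\omega^{\mathrm{Tr}(x)}$ with $\omega=e^{2\pi i/p}$ and $\mathrm{Tr}$ the trace down to $F_{p}$), define the single-qudit unitary $\mathcal{F}\ket{a_i}=d^{-1/2}\sum_{j}\chi(a_i a_j)\ket{a_j}$. Two elementary properties drive everything. First, $\mathcal{F}$ interchanges the two initial states, $\mathcal{F}\ket{0}=\ket{s}$ and $\mathcal{F}^{\dagger}\ket{s}=\ket{0}$, which is exactly the swap $\ket{S}\leftrightarrow\ket{O}$ required between \eqref{eq:40} and \eqref{eq:67}. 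Second, and this is the crux, $\mathcal{F}$ reverses the orientation of a generalized CNOT.

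The key lemma I would establish is the conjugation identity
\begin{equation}
  (\mathcal{F}^{\dagger}_{m}\otimes\mathcal{F}_{n})\,C_{mn}(a_{k})\,(\mathcal{F}_{m}\otimes\mathcal{F}^{\dagger}_{n})=C_{nm}(a_{k}),
\end{equation}
that is, applying $\mathcal{F}^{\dagger}$ on the control qudit and $\mathcal{F}$ on the target interchanges control and target while preserving the label $a_{k}$. This is the field analogue of the qubit identity in which conjugating a CNOT by Hadamards on both wires reverses its direction. I would prove it by a direct character computation: write out the action on a basis vector $\ket{a_p,a_q}_{mn}$, push the controlled addition $a_j\mapsto a_j+a_i a_k$ on the target through the resulting character phases, and collapse the two internal sums using orthogonality of the additive characters, $d^{-1}\sum_{i}\chi((a_r-a_s)a_i)=\delta_{a_r,a_s}$. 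The bookkeeping of these phases is the only place where genuine work is needed.

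With the lemma in hand the theorem follows quickly. Define the local unitary $U=\bigl(\otimes_{i\in S}\mathcal{F}^{\dagger}_{i}\bigr)\bigl(\otimes_{j\in O}\mathcal{F}_{j}\bigr)$. Applying $U$ to \eqref{eq:36}, I insert $U^{\dagger}U$ between the gate product and $\ket{S}$: the factor acting on the initial state gives $U\ket{S}=\ket{O}$, while conjugating the product term by term turns each $C_{ij}(b_{ij})$ with $i\in S,\,j\in O$ into $C_{ji}(b_{ij})$ by the lemma (the Fourier factors on the qudits not touched by a given gate simply cancel). Hence $U\ket{G}=\prod_{i\in S,\,j\in O}C_{ji}(b_{ij})\ket{O}$, which upon renaming the summation indices is precisely the dual state \eqref{eq:66}. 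Since $U$ is a tensor product of single-qudit unitaries, this establishes local unitary equivalence.

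I expect the main obstacle to be the conjugation identity: one has to verify that the finite-field Fourier transform genuinely reverses $C_{mn}(a_{k})$ without introducing a spurious label or phase. A useful sanity check is the uniform choice $\mathcal{F}^{\otimes N}$, which still reverses each gate but flips the label to $-b_{ij}$; the sign can then be absorbed by the local correction $\prod_{j\in O}D_{j}(-1)$ via \eqref{eq:26} (and it disappears altogether in characteristic two, where $-1=1$). The mixed choice of $\mathcal{F}$ and $\mathcal{F}^{\dagger}$ used above avoids this correction and reproduces the dual labels directly.
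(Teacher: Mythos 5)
Your proposal is correct and takes essentially the same route as the paper: conjugation by a local finite-field Fourier transform that reverses the direction of each generalized CNOT while exchanging $\ket{0}$ and $\ket{s}$, exactly the content of the paper's Eqs.~(\ref{eq:62}) and (\ref{eq:63}). The only notable difference is that you build the transform from the trace character $\chi(x)=\omega^{\mathrm{Tr}(x)}$, whose compatibility $\mathrm{Tr}((da)b)=\mathrm{Tr}((db)a)$ with field multiplication renders the paper's auxiliary coordinate-reversal unitary $V$ unnecessary, and your asymmetric choice $\mathcal{F}^{\dagger}$/$\mathcal{F}$ on control/target correctly preserves the gate labels.
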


\begin{proof}
  For a finite field with $d=p^{n}$ and $p$ a prime, the element is
  denoted as $a=\sum_{i=0}^{n-1} a_{i} \alpha^{i}$, where
  $a_{i}\in\{0,1,\ldots,p-1\}$ and $\alpha$ is one root of some
  irreducible polynomial equation. The element in the finite field can
  be denoted as a vector $\vec{a}$. Then we introduce the discrete
Fourier transformation of the states $\{\ket{a}\}$ as
\begin{equation}
  \label{eq:48}
  {\ket{\vec{b}}}^{\prime} = \frac{1}{\sqrt{d}} \sum_{\vec{a}}
  \omega^{\vec{a}\vdot \vec{b}} \ket{\vec{a}},
\end{equation}
where
\begin{equation}
  \label{eq:64}
  \vec{a}\vdot \vec{b} = \sum_{i=0}^{n-1} a_{i} b_{i} \mod p.
\end{equation}
Therefore we define the Hadmard transformation as
\begin{equation}
  \label{eq:49}
  H  =   \sum_{\vec{b}} {\ket{\vec{b}}}^{\prime} \bra{\vec{b}}
     =  \frac{1}{\sqrt{d}} \sum_{\vec{a},\vec{b}} \omega^{\vec{a}\vdot
          \vec{b}} \op{\vec{a}}{\vec{b}}.
\end{equation}

Then
\begin{equation}
  \label{eq:50}
  C_{mn}(\vec{d}) = \sum_{\vec{a},\vec{b}} \op{\vec{a}} \otimes
  \op{\vec{b}+ \overrightarrow{da}}{\vec{b}}.
\end{equation}
Therefore
\begin{equation}
  \label{eq:51}
  H_{n} C_{mn}(\vec{d}) H_{n}^{\dagger} = \sum_{\vec{a},\vec{b}}
  \omega^{\overrightarrow{da} \vdot \vec{b}}
  \op{\vec{a},\vec{b}}.
\end{equation}
Note that
\begin{eqnarray*}
  \label{eq:59}
  \overrightarrow{da} \vdot \vec{b} & = & d_{i} a_{j} b_{k}
                                          \alpha^{i+j}(\alpha^{k}) \\
                                    & = & d_{i} a_{j} b_{k} \alpha^{i+j-k}(1) \\
                                    & = & d_{i} a_{j} b_{k} \alpha^{i
                                          + (n-1 -k) - (n-1-j)}(1) \\
                                    & = & d_{i} a_{j} b_{k} \alpha^{i
                                          + (n-1-k)}(\alpha^{n-1-j}) \\
                                    & = & \overrightarrow{db^{\prime}}
                                          \vdot \vec{a}^{\prime},
\end{eqnarray*}
where
\begin{eqnarray}
  \label{eq:60}
  b^{\prime}_{n-1-k} & = & b_{k}, \\
  a^{\prime}_{n-1-j} & = & a_{j}.
\end{eqnarray}
So we introduce local unitary transformation
\begin{equation}
  \label{eq:61}
  V \ket{\vec{a}} = \ket{\vec{a}^{\prime}}.
\end{equation}
Therefore we have
\begin{eqnarray}
  \label{eq:62}
  &  & H_{m}^{\dagger} V_{m} V_{n} H_{n} C_{mn}(\vec{d}) H_{n}^{\dagger}
       V_{n}^{\dagger} V_{m}^{\dagger} H_{m} \nonumber\\
  & = & H_{m}^{\dagger} V_{m} V_{n} \sum_{\vec{a},\vec{b}}
        \omega^{\overrightarrow{db^{\prime}} \vdot \vec{a}^{\prime}}
        \op{\vec{a},\vec{b}}
        V_{n}^{\dagger} V_{m}^{\dagger} H_{m}\nonumber\\
  & = & H_{m}^{\dagger} \sum_{\vec{a},\vec{b}}
        \omega^{\overrightarrow{db^{\prime}} \vdot \vec{a}^{\prime}}
        \op{\vec{a}^{\prime},\vec{b}^{\prime}} H_{m} \nonumber\\
  & = & H_{m}^{\dagger} \sum_{\vec{a},\vec{b}}
        \omega^{\overrightarrow{db} \vdot \vec{a}}
        \op{\vec{a},\vec{b}} H_{m} \nonumber \\
  & = &  \sum_{\vec{a},\vec{b}}
        \op{\vec{a} + \overrightarrow{db},\vec{b}}{\vec{a},\vec{b}}
        \nonumber \\
  & = & C_{nm}(\vec{d}).
\end{eqnarray}
In addition,
\begin{eqnarray}
  \label{eq:63}
  V_{n} H_{n} \ket{0}_{n} & = & \ket{s}_{n}, \\
  H_{m}^{\dagger} V_{m} \ket{s}_{m} & = & \ket{0}_{m}.
\end{eqnarray}
Therefore we have
\begin{eqnarray}
  \label{eq:68}
  &  & \otimes_{m\in S} H_{m}^{\dagger} V_{m} \otimes_{n\in O} V_{n} H_{n}
  \prod_{i\in S,j\in O}
  C_{ij}(b_{ij}) \ket{S} \nonumber\\
& = &  \prod_{i\in O,j\in S} C_{ij}(b_{ji}) \ket{O}.
\end{eqnarray}
This completes our proof.

\end{proof}

This theorem implies that we can restrict ourselves in the cases where
the cardinality of $S$ is less than the cardinality of $O$, i.e.
$[N/2]$.

\subsection{Examples}
\label{sec:examples}

Now let us apply the above theorems to study the possible types of
entanglement generated by the controlled gates for $N=3,4,5$ with the
help of Eq.~(\ref{eq:26}) and Eq.~(\ref{eq:27}).

There is only one type of two qudit graph state, which is the qudit
Bell state:
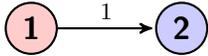
\begin{figure}[htbp]
  \centering
  \begin{tikzpicture}[->,>=stealth',shorten >=1pt,auto,node
    distance=2cm, thick]
    \node[snode] (1) {1}; %
    \node[onode] (2) [right of=1] {2}; %
    % \node[onode] (3) [right of=2] {3}; %
    % \node[onode] (4) [right of=3] {4};
    %
    \path[every node/.style={font=\sffamily\small}] (1) edge node
    [above] {$1$} (2);
% (2) edge node [above] {$2,a_{j}$} (3) (3)
    % edge node [above] {$3,a_{k}$} (4);
  \end{tikzpicture}
  \caption{Two qudit graph.\label{fig:6}}
\end{figure}

There is also one type of three qudit graph state, which is a
generalized GHZ state:
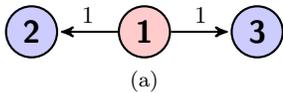
\begin{figure}[htbp]
  \centering %
  \subfloat[][] {
    \begin{tikzpicture}[->,>=stealth',shorten >=1pt,auto,node
      distance=1.5cm, thick] \node[snode] (1) {1}; %
      \node[onode] (2) [left of=1] {2}; %
      \node[onode] (3) [right of=1] {3}; %
      % \node[onode] (4) [right of=3] {4};
      %
      \path[every node/.style={font=\sffamily\small}] %
      (1) edge node [above] {$1$} (2) %
      (1) edge node [above] {$1$} (3);
      % (2) edge node [above] {$2,a_{j}$} (3) (3) edge node [above]
      % {$3,a_{k}$} (4);
    \end{tikzpicture}
  }  
  \caption{Three qudit graph.\label{fig:7}}
\end{figure}

There are two types of four qudit graph states. One is four qudit GHZ
state in Fig.~\ref{fig:8}~(a). The other type in Fig.~\ref{fig:8}~(b) has a more fruitful configuration, which will be studied in next section.
\begin{figure}[htbp]
  \centering %
  \subfloat[][] {
    \begin{tikzpicture}[->,>=stealth',shorten >=1pt,auto,node
      distance=1.5cm, thick]                %
      \node[snode] (1) {1}; %
      \node[onode] (2) [right of=1] {2}; %
      \node[onode] (3) [below of=2] {3}; %
      \node[onode] (4) [left of=3] {4};
      \path[every node/.style={font=\sffamily\small}] %
      (1) edge node [above] {$1$} (2) %
      (1) edge node [above] {$1$} (3) %
      (1) edge node [right] {$1$} (4);
      % (2) edge node [above] {$2,a_{j}$} (3) (3) edge node [above]
      % {$3,a_{k}$} (4);
    \end{tikzpicture}
  }  \quad
\subfloat[][] {
    \begin{tikzpicture}[->,>=stealth',shorten >=1pt,auto,node
      distance=1.5cm, thick]                %
      \node[snode] (1) {1}; %
      \node[onode] (2) [right of=1] {2}; %
      \node[snode] (3) [below of=2] {3}; %
      \node[onode] (4) [left of=3] {4};
      \path[every node/.style={font=\sffamily\small}] %
      (1) edge node [above] {$1$} (2) %
      (1) edge node [right] {$1$} (4) %
      (3) edge node [above] {$1$} (4) %
      (3) edge node [right] {$a_{r}$} (2);
      % (2) edge node [above] {$2,a_{j}$} (3) (3) edge node [above]
      % {$3,a_{k}$} (4);
    \end{tikzpicture}
  }
  \caption{Four qudit graph states.\label{fig:8}}
\end{figure}
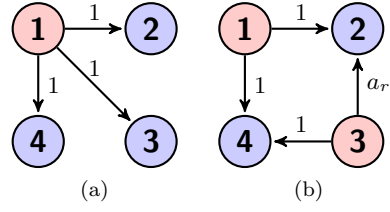

There are also two types of five qudit graph states:
\begin{figure}[htbp]
  \centering %
  \subfloat[][] {
    \begin{tikzpicture}[->,>=stealth',shorten >=1pt,auto,node
      distance=1.5cm, thick]                %
      \node[snode] (1) {1}; %
      \node[onode] (2) [right of=1] {2}; %
      \node[onode] (3) [above of=1] {3}; %
      \node[onode] (4) [left of=1] {4};  %
      \node[onode] (5) [below of=1] {5};  %
      \path[every node/.style={font=\sffamily\small}] %
      (1) edge node [above] {$1$} (2) %
      (1) edge node [left] {$1$} (3) %
      (1) edge node [below] {$1$} (4) %
      (1) edge node [right] {$1$} (5); %
      % (2) edge node [above] {$2,a_{j}$} (3) (3) edge node [above]
      % {$3,a_{k}$} (4);
    \end{tikzpicture}
  }  
\quad
  \subfloat[][] {
    \begin{tikzpicture}[->,>=stealth',shorten >=1pt,auto,node
      distance=1.5cm, thick]                %
      \node[onode] (1) {1}; %
      \node[snode] (2) [right of=1] {2}; %
      \node[onode] (3) [above of=1] {3}; %
      \node[snode] (4) [left of=1] {4};  %
      \node[onode] (5) [below of=1] {5};  %
      \path[every node/.style={font=\sffamily\small}] %
      (2) edge node [above] {$1$} (1) %
      (2) edge node [above] {$a_{r}$} (3) %
      (2) edge node [above] {$1$} (5) %
      (4) edge node [above] {$1$} (1) %
      (4) edge node [above] {$1$} (3) %
      (4) edge node [above] {$a_{j}$} (5); %
      % (2) edge node [above] {$2,a_{j}$} (3) (3) edge node [above]
      % {$3,a_{k}$} (4);
    \end{tikzpicture}
  }  
  \caption{Five qudit graph states.\label{fig:9}}
\end{figure}
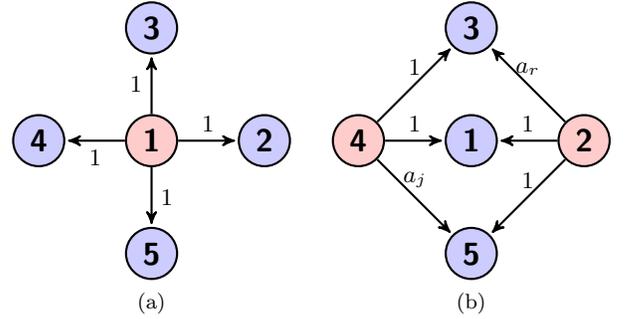

\section{Entanglement properties of qudit graph states}
\label{sec:entangl-prop-qudit}

In this section we study the maximal entanglement of graph states defined in previous sections.
The state in Fig.~\ref{fig:8}~(b) can be written as
\bea
\label{eq:psgamma}
\ket{\ps(a_r)}=d^{-1} \sum_{i,k} \ket{a_i,a_i+a_r a_k,a_k,a_i+a_k},
\eea
where $a_r\in F_d$, the dimension $d=p^n$ with a prime $p$ and positive integer $n$. We have
\bl
\label{le:pha}
$\ket{\ps(a_r)}$ is a maximally entangled state when $a_r\in F_d\setminus\{a_0,a_1\}$.
\el
\bpf
Since $F_d$ is a field and $a_r\in F_d\setminus\{a_0,a_1\}$, we have $F_d=a_r F_d=a_i+F_d$ for any $a_i\in F_d$. One can easily verify that all six bipartite reduced density operators of $\ket{\ps(a_r)}$ are maximally mixed states. 
So $\ket{\ps(a_r)}$ is a maximally entangled state.
\epf
If $n=1$ then $d$ is a prime number. This case has been studied in \cite{gkl15} and is a special case of the lemma. The case $d=2$ is excluded in the lemma, and it coincides with the known result that 4-qubit maximally entangled state does not exist \cite{gw10}.
we demonstrate them by a simple example. We set $a_r=2$, $a_j=j$ and $d=4$ in \eqref{eq:psgamma},  and obtain
\bea
\ket{\ps(2)} 
&=&{1\over4}(\ket{0000}+\ket{0211}+\ket{0322}+\ket{0133}
\notag\\
&+&\ket{1101}+\ket{1310}+\ket{1223}+\ket{1032}
\notag\\
&+&\ket{2202}+\ket{2013}+\ket{2120}+\ket{2331}
\notag\\
&+&\ket{3303}+\ket{3112}+\ket{3021}+\ket{3230}),
\eea
by using the computation rule in Table \ref{tab:field4}. On the other hand, Lemma \ref{le:pha} does not hold when $d$ is replaced  by any integer which is not a prime power.  

\begin{table}
  \caption{
  \label{tab:field4}
The two tables respectively account for the addition and multiplication operations for $F_4$. The proof of Lemma \ref{le:pha} also holds when $F_d$ is replaced by any finite domain, because it coincides with the finite field \cite{domain}. }
\begin{tabular}{|c|c|c|c|c|}
   \hline
   $+$   & 0    & 1 & 2  & 3        \\\hline
   0       & 0    & 1 & 2  & 3        \\\hline
   1       & 1    & 0 & 3  & 2        \\\hline
   2       & 2    & 3 & 0  & 1        \\\hline
   3       & 3    & 2 & 1  & 0        \\\hline
 \end{tabular}
 
 \begin{tabular}{|c|c|c|c|c|}
   \hline
   $\times$   & 0    & 1 & 2  & 3        \\\hline
   0       & 0    & 0 & 0  & 0        \\\hline
   1       & 0    & 1 & 2  & 3        \\\hline
   2       & 0    & 2 & 3  & 1        \\\hline
   3       & 0    & 3 & 1  & 2        \\\hline
 \end{tabular}
 \end{table}

Next we give an example of maximal entanglement beyond the primer-power dimension. The state
\bea
\ket{P'}
=
d^{-1} \sum_{i,k} \ket{i,i-k,k,i+k}
\eea
appeared in \cite{gkl15}, in which $d$ was considered as a prime number.
We point out that the state can be defined for any integer $d$. One can straightforwardly show that $\ket{P'}$ is a maximally entangled state for any odd $d>2$, and is not a maximally entangled state for any even $d>1$.
The two families of states $\ket{\ps(a_r)}$ and $\ket{P'}$ show that 4-partite maximally entangled states are universal in high dimensional spaces. Indeed we have

\begin{theorem}
\label{thm:multiple4}
The maximally entangled 4-partite pure state exists when the dimension $d$ is an odd number at least three, or a multiple of four.
\end{theorem}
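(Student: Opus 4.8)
The plan is to manufacture a maximally entangled $4$-partite state in an arbitrary admissible dimension $d$ by tensoring together the two explicit families already established, exploiting the fact that the maximal-entanglement property multiplies under tensor products of the local dimension. Recall that, in the sense used here, a pure $4$-partite state is maximally entangled exactly when all of its two-party reduced density operators are maximally mixed --- the six marginals appearing in the proof of Lemma~\ref{le:pha}; tracing out a further party then forces every one-party marginal to be maximally mixed as well, so this single condition certifies every bipartition. The two building blocks are $\ket{P'}=d^{-1}\sum_{i,k}\ket{i,i-k,k,i+k}$, maximally entangled for every odd $d>2$, and $\ket{\ps(a_r)}$ of \eqref{eq:psgamma}, which by Lemma~\ref{le:pha} is maximally entangled over $F_d$ whenever $a_r\in F_d\setminus\{a_0,a_1\}$, that is, whenever $d=p^n\neq 2$.

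First I would record a multiplicativity principle: if $\ket{\Phi_1}$ is a maximally entangled $4$-partite state of local dimension $d_1$ and $\ket{\Phi_2}$ one of local dimension $d_2$, then, grouping the $q$-th subsystem of $\ket{\Phi_1}$ with the $q$-th subsystem of $\ket{\Phi_2}$ into a single party of dimension $d_1 d_2$ for each $q=1,2,3,4$, the product $\ket{\Phi_1}\otimes\ket{\Phi_2}$ becomes a maximally entangled $4$-partite state of local dimension $d_1 d_2$. Indeed, any two-party marginal of the product factorizes as $\r^{(1)}\otimes\r^{(2)}$, where $\r^{(1)}$ is maximally mixed on $d_1^2$ dimensions and $\r^{(2)}$ on $d_2^2$; their tensor product is the maximally mixed state on $(d_1 d_2)^2$ dimensions, which is precisely the maximal-entanglement condition in dimension $d_1 d_2$.

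Given these facts the theorem separates into two cases. If $d$ is odd and at least three, then $\ket{P'}$ already furnishes the required state and nothing further is needed. If $d$ is a multiple of four, I would write $d=2^a m$ with $a\geq 2$ and $m$ odd. Over the field $F_{2^a}$ the cardinality $2^a\geq 4$ guarantees an element $a_r\notin\{a_0,a_1\}$, so $\ket{\ps(a_r)}$ is a maximally entangled state of dimension $2^a$. If $m=1$ this is already the desired state; otherwise $m\geq 3$ is odd, and $\ket{P'}$ in dimension $m$ is maximally entangled. The multiplicativity principle applied to these two factors then yields a maximally entangled $4$-partite state of dimension $2^a m=d$.

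The only genuinely substantive step is the multiplicativity principle, whose proof is the short marginal computation above; the remainder is bookkeeping about prime factorizations. It is worth noting that the construction requires each prime-power factor of $d$ to differ from $2$, since $\ket{\ps(a_r)}$ demands a field element outside $\{a_0,a_1\}$. This holds automatically for odd $d\geq 3$ and for every multiple of four (whose $2$-part $2^a$ has $a\geq 2$), but breaks down exactly when $d\equiv 2\pmod 4$, where the isolated factor $2^1$ offers no such element --- in agreement with the excluded qubit case $d=2$.
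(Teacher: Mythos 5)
Your proof is correct and follows essentially the same route as the paper: handle odd $d\ge 3$ with $\ket{P'}$, handle the $2^a$ factor ($a\ge 2$) with $\ket{\ps(a_r)}$ via Lemma~\ref{le:pha}, and tensor the pieces together. The only differences are cosmetic --- you keep the odd part $m$ as a single $\ket{P'}$ factor where the paper splits it into prime factors, and you spell out the multiplicativity of the maximally-mixed-marginal condition that the paper leaves as ``straightforward.''
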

\bpf
The state $\ket{P'}$ validates the assertion when $d$ is an odd number at least three. It remains to prove the assertion when $d$ is a multiple of four. We may assume $d=2^m \Pi^k_{j=1} p_j$ where $m\ge2$, $k\ge0$, and $p_j\ge3$ are prime numbers. 
The first assertion implies that we may assume $\ket{\ps_j}$ as the maximally entangled state with every system of dimension $p_j$. From Lemma \ref{le:pha}, we may assume $\ket{\ph}$ as the maximally entangled state with every system of dimension $2^m$. We define a new 4-partite pure state as the tensor product of these states, i.e.,
$
\ket{\ph} \ox \ket{\ps_1} \ox \cdots \ox \ket{\ps_k}.
$
One can straightforwardly show that this is a maximally entangled state.
\epf

The above proof indeed shows an analytical way of constructing the 4-partite maximally entangled states in designated dimensions. In spite of the above results,
we do not have any example of 4-partite maximally entangled state with dimension equal to the multiple of two and any positive odd number. We conjecture they might not exist. This is true when the odd number is one \cite{gw10}. So the first challenge is to construct a 4-partite maximally entangled state with dimension $6$. It easily reminds us of the construction of mutually unbiased basis of dimension $6$, which is a long-standing problem in quantum physics.

Finally as a more independent interest, we construct the connection between maximal entanglement and the entropy problem recently proposed in \cite{chl14}. The problem asks to construct (or exlcude the existence of) a tripartite quantum state $\r_{ABC}$ such that $\rank\r_{AB}>\rank\r_{AC}\cdot\rank\r_{BC}$. The problem turns out to be hard and constructing the connection might be helpful to finding out its solution.

\bl
\label{le:2mm=mm}
Let $\r_{ABC}$ be a tripartite state whose bipartite reduced density matrices are all maximally mixed states ${1\over d^2}I_d\ox I_d$. Then
\\
(i) $\r_{ABC}$ exists and $\rank\r_{ABC}\ge d$.
\\
(ii) 
The maximally entangled 4-partite pure state exists if and only if there is a $\r_{ABC}$ such that $\rank\r_{ABC}=d$.
\el
\bpf
(i) A trivial example is $\r_{ABC}={1\over d^3}I_d\ox I_d\ox I_d$. Let $\ket{\ps}_{ABCD}$ be the purification of $\r_{ABC}$. Then $\rank\r_{AB}=\rank\r_{CD}=d^2 \le \rank\r_C \rank\r_D$. Since $\rank\r_C=d$, we have $\rank\r_{ABC}=\rank \r_D\ge d$.

(ii) We prove the ``if'' part. Suppose there is a tripartite state $\r_{ABC}$ of rank $d$, whose bipartite reduced density matrices are all maximally mixed states ${1\over d^2}I_d\ox I_d$. Let $\ket{\ps}_{ABCD}$ be the purification of $\r_{ABC}$. So $\ket{\ps}_{ABCD}\in\cH$ is maximally entangled. The ``only if'' part can be similarly proved.
This completes the proof.
\epf

\section{conclusions}
\label{sec:con}

We have constructed multipartite graph states with prime-power dimesnion using the generalized CNOT quantum circuit. We have proven that the graphs states are equivalent to a simple and operational standard form up to local unitary transformations and particle permutations. We also showed that some graph states are multipartite maximally entangled states, and that 4-partite maximally entangled states exist when the dimension is an odd number at least three or a multiple of four. The next problem is to quantify the entanglement of these graphs states in terms of multipartite entanglement measures, such as the geometric measure of entanglement and relative entropy of entanglement. Constructing the potential link between maximal entanglement and the mutually unbiased basis for dimension six may be a long-term goal of receiving more attentions.

\section*{Acknowledgments}
LC was supported by the Fundamental Research Funds for the Central Universities (Grant Nos. 30426401 and 30458601). DLZ was supported by NSF of China (Grants No.
11475254 and Grant No.  11175247) and NKBRSF of
China (Grant Nos. 2012CB922104 and 2014CB921202).

\appendix

% \section{Appendix}
% \label{sec:appendix}

\section{Proof of commutation relations}
\label{sec:proof-comm-relat}

In this appendix we give the proof of relations in Eqs. \eqref{eq:22}-\eqref{eq:33}, respectively.

The proof of Eq.~(\ref{eq:22}):

   Notice that the identity operator for the $m$-th particle is
  \begin{equation}
    \label{eq:23}
    I_{m} = \sum_{\alpha} \vert a_{\alpha}\rangle\langle
    a_{\alpha}\vert \equiv \vert a_{\alpha}\rangle\langle
    a_{\alpha}\vert,
  \end{equation}
  where we take the Einstein's rule for repeated indexes. Then
  \begin{eqnarray*}
    & & D_{m}(a_{i}) A_{m}(a_{j}) \\
    & = & D_{m}(a_{i}) A_{m}(a_{j}) I_{m} \\
    & = & D_{m}(a_{i}) A_{m}(a_{j}) \vert a_{\alpha}\rangle\langle
          a_{\alpha}\vert \\
    & = & D_{m}(a_{i}) \vert a_{\alpha} + a_{j}\rangle\langle
          a_{\alpha}\vert \\
    & = & \vert a_{i} a_{\alpha} + a_{i} a_{j}\rangle\langle
          a_{\alpha}\vert \\
    & = & A_{m}(a_{i} a_{j}) \vert a_{i} a_{\alpha}\rangle\langle
          a_{\alpha}\vert \\
    & = & A_{m}(a_{i} a_{j}) D_{m}(a_{i}).
  \end{eqnarray*}

  The proof of Eq.~(\ref{eq:24}):
  \begin{eqnarray*}
    & & C_{mn}\left(a_{i}\right)A_{m}\left(a_{j}\right)\\
    & = & C_{mn}\left(a_{i}\right)A_{m}\left(a_{j}\right)\vert
          a_{\alpha},a_{\beta}\rangle\langle a_{\alpha},a_{\beta}\vert\\
    & = & C_{mn}\left(a_{i}\right)\vert
          a_{\alpha}+a_{j},a_{\beta}\rangle\langle
          a_{\alpha},a_{\beta}\vert\\
    & = & \vert
          a_{\alpha}+a_{j},a_{\beta}+a_{i}a_{\alpha}+a_{i}a_{j}\rangle\langle
          a_{\alpha},a_{\beta}\vert\\
    & = & A_{m}\left(a_{j}\right)A_{n}\left(a_{i}a_{j}\right)\vert
          a_{\alpha},a_{\beta}+a_{i}a_{\alpha}\rangle\langle
          a_{\alpha},a_{\beta}\vert\\
    & = &
          A_{m}\left(a_{j}\right)A_{n}\left(a_{i}a_{j}\right)C_{mn}\left(a_{i}
          \right).
  \end{eqnarray*}

  The proof of Eq.~(\ref{eq:25}):
  \begin{eqnarray*}
    & & C_{mn}\left(a_{i}\right)A_{n}\left(a_{j}\right)\\
    & = & C_{mn}\left(a_{i}\right)A_{n}\left(a_{j}\right)\vert
          a_{\alpha},a_{\beta}\rangle\langle a_{\alpha},a_{\beta}\vert\\
    & = & C_{mn}\left(a_{i}\right)\vert
          a_{\alpha},a_{\beta}+a_{j}\rangle\langle
          a_{\alpha},a_{\beta}\vert\\
    & = & \vert a_{\alpha},a_{\beta}+a_{j}+a_{i}a_{\alpha}\rangle\langle
          a_{\alpha},a_{\beta}\vert\\
    & = & A_{n}\left(a_{j}\right)\vert
          a_{\alpha},a_{\beta}+a_{i}a_{\alpha}\rangle\langle
          a_{\alpha},a_{\beta}\vert\\
    & = & A_{n}\left(a_{j}\right)C_{mn}\left(a_{i}\right).
  \end{eqnarray*}

  The proof of Eq.~(\ref{eq:26}):
  \begin{eqnarray*}
    & & C_{mn}\left(a_{i}\right)D_{m}\left(a_{j}\right)\\
    & = & C_{mn}\left(a_{i}\right)D_{m}\left(a_{j}\right)\vert
          a_{\alpha},a_{\beta}\rangle\langle a_{\alpha},a_{\beta}\vert\\
    & = & C_{mn}\left(a_{i}\right)\vert
          a_{j}a_{\alpha},a_{\beta}\rangle\langle
          a_{\alpha},a_{\beta}\vert\\
    & = & \vert
          a_{j}a_{\alpha},a_{\beta}+a_{i}a_{j}a_{\alpha}\rangle\langle
          a_{\alpha},a_{\beta}\vert\\
    & = & D_{m}\left(a_{j}\right)\vert
          a_{\alpha},a_{\beta}+a_{i}a_{j}a_{\alpha}\rangle\langle
          a_{\alpha},a_{\beta}\vert\\
    & = & D_{m}\left(a_{j}\right)C_{mn}\left(a_{i}a_{j}\right).
  \end{eqnarray*}

  The proof Eq.~(\ref{eq:27}):
  \begin{eqnarray*}
    & & C_{mn}\left(a_{i}\right)D_{n}\left(a_{j}\right)\\
    & = & C_{mn}\left(a_{i}\right)D_{n}\left(a_{j}\right)\vert
          a_{\alpha},a_{\beta}\rangle\langle a_{\alpha},a_{\beta}\vert\\
    & = & C_{mn}\left(a_{i}\right)\vert
          a_{\alpha},a_{\beta}a_{j}\rangle\langle
          a_{\alpha},a_{\beta}\vert\\
    & = & \vert a_{\alpha},a_{\beta}a_{j}+a_{i}a_{\alpha}\rangle\langle
          a_{\alpha},a_{\beta}\vert\\
    & = & D_{n}\left(a_{j}\right)\vert
          a_{\alpha},a_{\beta}+a_{j}^{-1}a_{i}a_{\alpha}\rangle\langle
          a_{\alpha},a_{\beta}\vert\\
    & = & D_{n}\left(a_{j}\right)C_{mn}\left(a_{j}^{-1}a_{i}\right).
  \end{eqnarray*}

  If $A\neq0$, then

  \begin{eqnarray*}
    & & C_{mn}\left(a_{i}\right)C_{nm}\left(a_{j}\right)\\
    & = & C_{mn}\left(a_{i}\right)C_{nm}\left(a_{j}\right)\vert
          a_{\alpha},a_{\beta}\rangle\langle a_{\alpha},a_{\beta}\vert\\
    & = & C_{mn}\left(a_{i}\right)\vert
          a_{\alpha}+a_{j}a_{\beta},a_{\beta}\rangle\langle
          a_{\alpha},a_{\beta}\vert\\
    & = & \vert
          a_{\alpha}+a_{j}a_{\beta}, a_{\beta}+a_{i}a_{\alpha} +
          a_{i}a_{j}a_{\beta}\rangle\langle
          a_{\alpha},a_{\beta}\vert\\
    & = & \vert
          a_{\alpha}+a_{j}a_{\beta},A a_{\beta}+a_{i}a_{\alpha}\rangle\langle
          a_{\alpha},a_{\beta}\vert\\
    & = & D_{n}\left(A\right)\vert
          a_{\alpha}+a_{j}a_{\beta},a_{\beta}+\frac{a_{i}}{A}a_{\alpha}\rangle\langle
          a_{\alpha},a_{\beta}\vert\\
    & = &
          D_{n}\left(A\right)C_{nm}\left(a_{j}\right)\vert \frac{1}{A}
          a_{\alpha},a_{\beta} +
          \frac{a_{i}}{A}a_{\alpha}\rangle\langle
          a_{\alpha},a_{\beta}\vert\\
    & = &
          D_{n}\left(A\right)C_{nm}\left(a_{j}\right) D_{m}
          \left(\frac{1}{A}\right)\vert
          a_{\alpha},a_{\beta}+ \frac{a_{i}}{A}a_{\alpha}\rangle\langle
          a_{\alpha},a_{\beta}\vert\\
    & = &
          D_{n}\left(A\right)C_{nm}\left(a_{j}\right)D_{m}
          \left(\frac{1}{A}\right) C_{mn}\left(\frac{a_{i}}{A}\right)\\
    & = &
          D_{n}\left(A\right)D_{m}\left(\frac{1}{A}\right)C_{nm}
          \left(a_{j}A\right)C_{mn}\left(\frac{a_{i}}{A}\right).
  \end{eqnarray*}
  If $A=0$, then
  \begin{eqnarray*}
    & & C_{mn}\left(a_{i}\right)C_{nm}\left(a_{j}\right)\\
    & = & \vert a_{\alpha}+a_{j}a_{\beta},a_{i}a_{\alpha}\rangle\langle
          a_{\alpha},a_{\beta}\vert\\
    & = & W_{mn}\vert
          a_{i}a_{\alpha},a_{\alpha}+a_{j}a_{\beta}\rangle\langle
          a_{\alpha},a_{\beta}\vert\\
    & = & W_{mn}D_{m}\left(a_{i}\right)D_{n}\left(a_{j}\right)\vert
          a_{\alpha},a_{j}^{-1}a_{\alpha}+a_{\beta}\rangle\langle
          a_{\alpha},a_{\beta}\vert\\
    & = &
          W_{mn}D_{m}\left(a_{i}\right)D_{n}\left(a_{j}\right)
          C_{mn}\left(a_{j}^{-1}\right),
  \end{eqnarray*}
  where $W_{mn}$ is the swapp gate between the $m$-th qudit and the
  $n$-th qudit.

  The proof of Eq.~(\ref{eq:31}):
  \begin{eqnarray*}
    & & C_{mn}\left(a_{i}\right)C_{ml}\left(a_{j}\right)\\
    & = & C_{mn}\left(a_{i}\right)C_{ml}\left(a_{j}\right)\vert
          a_{\alpha},a_{\beta},a_{\gamma}\rangle\langle
          a_{\alpha},a_{\beta},a_{\gamma}\vert_{mnl}\\
    & = & C_{mn}\left(a_{i}\right)\vert
          a_{\alpha},a_{\beta},a_{\gamma}+a_{j}a_{\alpha}\rangle\langle
          a_{\alpha},a_{\beta},a_{\gamma}\vert_{mnl}\\
    & = & \vert
          a_{\alpha},a_{\beta}+a_{i}a_{\alpha},a_{\gamma}+a_{j}a_{\alpha}\rangle\langle
          a_{\alpha},a_{\beta},a_{\gamma}\vert_{mnl}\\
    & = & C_{ml}\left(a_{j}\right)\vert
          a_{\alpha},a_{\beta}+a_{i}a_{\alpha},a_{\gamma}\rangle\langle
          a_{\alpha},a_{\beta},a_{\gamma}\vert_{mnl}\\
    & = & C_{ml}\left(a_{j}\right)C_{mn}\left(a_{i}\right).
  \end{eqnarray*}

  The proof of Eq.~(\ref{eq:32}):
  \begin{eqnarray*}
    & & C_{mn}\left(a_{i}\right)C_{l n}\left(a_{j}\right)\\
    & = & C_{mn}\left(a_{i}\right)C_{l n}\left(a_{j}\right)\vert
          a_{\alpha},a_{\beta},a_{\gamma}\rangle\langle
          a_{\alpha},a_{\beta},a_{\gamma}\vert_{mnl}\\
    & = & C_{mn}\left(a_{i}\right)\vert
          a_{\alpha},a_{\beta}+a_{j}a_{\gamma},a_{\gamma}\rangle\langle
          a_{\alpha},a_{\beta},a_{\gamma}\vert_{mnl}\\
    & = & \vert
          a_{\alpha},a_{\beta}+a_{j}a_{\gamma}+a_{i}a_{\alpha},a_{\gamma}\rangle\langle
          a_{\alpha},a_{\beta},a_{\gamma}\vert_{mnl}\\
    & = & C_{l n}\left(a_{j}\right)\vert
          a_{\alpha},a_{\beta}+a_{i}a_{\alpha},a_{\gamma}\rangle\langle
          a_{\alpha},a_{\beta},a_{\gamma}\vert_{mnl}\\
    & = & C_{l n}\left(a_{j}\right)C_{mn}\left(a_{i}\right).
  \end{eqnarray*}

  The proof of Eq.~(\ref{eq:33}):
  \begin{eqnarray*}
    & & C_{mn}\left(a_{i}\right)C_{nl}\left(a_{j}\right)\\
    & = & C_{mn}\left(a_{i}\right)C_{nl}\left(a_{j}\right)\vert
          a_{\alpha},a_{\beta},a_{\gamma}\rangle\langle
          a_{\alpha},a_{\beta},a_{\gamma}\vert_{mnl}\\
    & = & C_{mn}\left(a_{i}\right)\vert
          a_{\alpha},a_{\beta},a_{\gamma}+a_{j}a_{\beta}\rangle\langle
          a_{\alpha},a_{\beta},a_{\gamma}\vert_{mnl}\\
    & = & \vert
          a_{\alpha},a_{\beta}+a_{i}a_{\alpha},a_{\gamma}+a_{j}a_{\beta}\rangle\langle
          a_{\alpha},a_{\beta},a_{\gamma}\vert_{mnl}\\
    & = & C_{nl}\left(a_{j}\right)\vert
          a_{\alpha}, a_{\beta}+a_{i}a_{\alpha}, a_{\gamma} -
          a_{i}a_{j}a_{\alpha}\rangle\langle
          a_{\alpha},a_{\beta},a_{\gamma}\vert_{mnl}\\  
    & = &
          C_{nl}\left(a_{j}\right)C_{mn}\left(a_{i}\right) C_{ml}
          \left(-a_{i}a_{j}\right).  
  \end{eqnarray*}

\bibliographystyle{unsrt}

\bibliography{graph}

\end{document}